\newtheorem{theorem}{Theorem}
\newtheorem{lemma}{Lemma}
\newtheorem{definition}{Definition}
\newtheorem{example}{Example}
\definecolor{myRed}{rgb}{0.75,0,0}
\renewcommand{\phi}{\varphi}
\newcommand{\imp}{\rightarrow}
\newcommand{\lra}{\leftrightarrow}
\newcommand{\Lng}{\mathcal{L}}
\newcommand{\Prop}{\mathsf{Prop}}
\newcommand{\SMLD}{\mathbb{SMLD}}
\newcommand{\SPALD}{\mathbb{SPALD}}
\newcommand{\Kv}{\ensuremath{\textit{Kv}}}
\newcommand{\Kd}{\ensuremath{\textit{Kd}}}
\newcommand{\Kx}{\ensuremath{\textit{Kx}}}
\newcommand{\M}{\mathcal{M}}
\newcommand{\DF}{\ensuremath{\textsf{DEF}}}
\newcommand{\df}{\ensuremath{\textsf{def}}}
\newcommand{\mrg}{\mathsf{merge}}
\newcommand{\pick}{\mathsf{pick}}
\newcommand{\col}{\mathopen:}
\title{How to Agree without Understanding Each Other:\texorpdfstring{\\}{ }Public Announcement Logic with Boolean Definitions}
\newcommand{\orcid}[1]{\\{\href{https://orcid.org/#1}{\footnotesize\tt #1}}}
\author{
Malvin Gattinger
\institute{Bernoulli Institute, University of Groningen}
\orcid{0000-0002-2498-5073}
\email{malvin@w4eg.eu}
\and
Yanjing Wang\footnote{Corresponding author}
\institute{Department of Philosophy, Peking University}
\orcid{0000-0002-9499-416X}
\email{y.wang@pku.edu.cn}
}
\begin{document}
\maketitle

\begin{abstract}
In standard epistemic logic, knowing that $p$ is the same as knowing that $p$ is true, but it does not say anything about understanding $p$ or knowing its meaning.
In this paper, we present a conservative extension of Public Announcement Logic (PAL) in which agents have knowledge or belief about both the truth values and the meanings of propositions.
We give a complete axiomatization of \emph{PAL with Boolean Definitions} and discuss various examples.
An agent may understand a proposition without knowing its truth value or the other way round.
Moreover, multiple agents can agree on something without agreeing on its meaning and vice versa.
\end{abstract}

% \keywords{Definitions, Epistemic Reasoning, Public Announcement Logic}

\section{Introduction}
\textit{Konnyaku Mondo} (jelly dialogue) is a story from the  traditional
Japanese comic storytelling in the \textit{rakugo} form.
Quoting~\cite{Mondo04}, the story goes like the following:
\begin{quote}
\emph{There was a temple where no monks were living any longer. A devil's
tongue jelly maker, named Rokubei, lived next door. He moved into the temple and
started pretending to be a monk. One day, a traveling Zen Buddhist monk passed
by and challenged Rokubei to a debate on Buddhism, Rokubei had no knowledge on
Buddhism and was not able to have a debate. He tried to refuse, but he could not
escape and finally agreed. The Buddhist dialogue started but Rokubei didn't know
how to perform and therefore, he kept silent. The Buddhist monk tried to
communicate to Rokubei in many ways. After some time, Rokubei started responding
with gestures to the body movements the monk made. The monk took this as a style
of dialogue and tried to answer in gestures, too. They exchanged gestures, and
after some time, the monk told Rokubei, ``your thoughts are profound and mine
are of no comparison. I am very sorry to have bothered you'', After saying this,
he left the temple.}
\end{quote}
In fact, the monk thought ``the master'' (Rokubei) had expressed deep Buddhist
thoughts by his gestures, but Rokubei had never learned any Buddhist thoughts.
Rather, from some stage on, he thought the monk was talking badly about his
jelly with those gestures, thus he gave the monk a lesson by some angry moves,
and apparently defeated the monk.

The intriguing nature of the story is that, as remarked in~\cite{Mondo04}, it seems the proposition \textit{Rokubei has defeated the monk} is common knowledge between the two, but it is based on mutual misunderstanding.
The two actually have completely different understanding for the commonly agreed ``defeat of the monk''.
Such mutual misunderstanding also happens a lot in everyday life communications, even in academic exchanges when people ``agree'' to the same thing due to different interpretations or definitions of the same concept.
See~\cite{Mondo04} for excellent (and entertaining) interactive discussions about such Konnyaku Mondo phenomena in Game theory.

Mutual misunderstanding is not always harmful.
To postpone immediate conflicts and achieve some consensus, it is sometimes even intended to allow respective interpretations of the same proposition, which happens in diplomatic scenarios.
For example, two brothers may disagree about who represents officially their father, but they may reach the temporary consensus that there is one and only one successor in order to avoid immediate conflicts.

Philosophically, if we require that knowledge should be at least properly
justified as Gettier's examples suggest~\cite{Gettier}, we can hardly say both
Rokubei and the monk ``know'' that Rokubei has defeated the monk, since the same
proposition is justified by two different reasons by different parties. The
tricky thing here is that perhaps there is no single ``real'' justification for
the defeat. More crucially, it is debatable in this particular case, whether
there is a fixed ``real'' meaning of the proposition that Rokubei has defeated
the monk.

As logicians, the story makes us think about how to represent such situations in the framework of epistemic logic, where $K_i p$ expresses that agent $i$ knows that $p$.
According to the standard Kripke semantics, $K_i p$
merely means that agent $i$ is certain that $p$ is true, and there is nothing
about the meaning of $p$ in the semantics. For example, suppose you do not
understand Chinese, but someone said a Chinese sentence $p$ and guaranteed you
its truth, then it seems you indeed know the \textit{truth value} of $p$, but
without knowing its \textit{meaning}. Now suppose the speaker then tells you
that by $p$, he actually meant $r\land q$ in your own language, then you know
both the meaning of $p$ and its truth value (and the truth values of $q$ and
$r$). Note that, even when $p$ is uttered in a language that you know, it still
can have a different meaning than the surface one, e.g, when a Chinese says ``we
will think about it later'' when asked a yes-no question, it often means ``no''.
It is crucial to see that knowing that \textit{$p$ means $\phi$} is different from knowing that $p \lra \phi$, where the latter is again simply about the truth value.
For example, knowing that $p \lra \top$ does not imply that \textit{$p$ means $\top$}.
We need new techniques to handle knowledge of meanings in epistemic logic.

More generally, knowing the value of something does not imply understanding what information it carries.
For example, knowing the ID number of a Chinese person (e.g., 110105198002290022) does not tell you much, unless you know that the first 6 digits encode the residence (Chaoyang District in Beijing), the next 8 digits encode the birth date (29th of February 1980) and so on.
Clearly, the interpretation of the structure of the ID is important to your understanding.
You may also only know part of the meaning of the message and there are different ``layers'' of your understanding.
You may or may not know that the gender of the person is given by the parity of the second to last digit (in this case even for female).
Merely knowing that the first 6 digits code the residence does not tell you the exact city where this person is registered, you may need to know further that the fist three digits code the city and 110 is the code for Beijing.
Therefore, by limited knowledge of the structure, you may only get part of the meaning.

Coming back to the propositional setting in this paper, for a sentence $\phi$,
you might also just understand some part of it, e.g., knowing what $q$ means in
$p \lor q$ but only understand $p$ as $\neg r$ for some incomprehensible $r$.
Now, if others elaborate that $r$ means $q\land o$, then you have a deeper
understanding of $p\lor q$. We may also enhance our understanding by matching
the structure of the proposition --- if someone utters $(p\land p') \lor p''$
and later explains that it means $q \lor (r\land r')$, then we know $q$ means
$p\land p'$, and $p''$ means $(r\land r')$.
The technical goal of this paper is to formally flesh out such reasoning patterns with a minimal extension of public announcement logic~\cite{Plaza89:lopc,Plaza2007:LoPC}.
The basic idea is to treat incomprehensible propositions as atomic propositions with \textit{boolean definitions} based on the basic atomic propositions.

Concerning related work, the distinction between knowing the value and knowing the meaning is crucial in cryptographic message passing.
The goal of encryption schemes is to hide the meaning of a message, even if the transmitted ciphertext is known~\cite{DeciSP,CohenD07}.
Another related recent attempt~\cite{Ding2016:FDep}, based on a logic of knowing value~\cite{WangFan2013KvPAL,vEGW2017:KvPIL,Baltag2016:KVV}, introduces a functional dependency operator to express that the agent knows a function which can explain the dependency between variables $c$ and $d$.
However, as the author of~\cite{Ding2016:FDep} also remarked, it is not enough to capture the meaning of variables.
In~\cite{LiuW13}, the meaning of an utterance by an agent depends on the \emph{type} of the agent, which is a function mapping the uttered proposition to its actual meaning.
Similarly, protocols can also give meaning to communicative actions, as demonstrated in~\cite{BarSel97:infoflow,Wang11,DGVW14}, where the meaning of an action is defined by the corresponding precondition of the protocol regarding this.
We discuss other related work in the conclusion.

In the rest of this paper, we first layout the language and semantics of our
logic in Section~\ref{sec.ls}, and then axiomatize it in Section~\ref{sec.ax},
before concluding with ideas for future work.

\section{PAL with Boolean Definitions}\label{sec.ls}

\subsection{Language, Models, Semantics}

Throughout the paper our languages and models are parameterized by a set of proposition letters $\mathsf{Prop}$ and a finite set of agents $I$.
The language we study consists of two parts: a purely boolean layer for which our models will also provide definitions, and on top of that a version of Public Announcement Logic (PAL) as in~\cite{Plaza89:lopc,DitHoekKooi2007:del}.

\begin{definition}[Language]
The boolean language $\Lng_B$ is defined by
\[ P  ::=  p  \mid  \lnot P  \mid  (P \land P) \]
where $p \in \Prop$, a countable set of propositional letters.
We already note that the parentheses are essential for pattern matching as we will see later.

The full language $\Lng$ is given by
\[ \phi  ::=  P  \mid  P \equiv P  \mid  \lnot \phi  \mid  (\phi \land \phi)  \mid  \Box_i \phi  \mid  [\phi]\phi \]
where $P \in \Lng_B$ and $i \in I$.
We write $Q \not\equiv P$ for $\lnot(Q \equiv P)$
and use the standard abbreviations for other boolean operators
  $\phi \lor \psi := \lnot (\lnot \phi \land \lnot \psi)$,
  $\phi \to \psi := \lnot \phi \lor \psi$ and
  $\phi \leftrightarrow \psi := (\phi \to \psi) \land (\psi \to \phi)$.
\end{definition}

For modalities we use the notation $\Box_i$ and avoid the symbol $K_i$ which would suggest knowledge.
This is because we will not assume additional modal axioms besides $K$.
Our framework can easily be adapted to multi-agent $KD45$, $S5$ and other logics.

We read $P_1 \equiv P_2$ as ``$P_1$ and $P_2$ have the same meaning'' or ``$P_1$ and $P_2$ are equivalent by definition''.
For example, if $p$ means $r \lor r'$ and $q$ means $\neg o$, then our semantics below will also imply that $p \land \neg o$ has the same meaning as $((r \lor r') \land q)$.

We avoid reading the $\equiv$ operator as ``\dots is defined as \dots'' which would suggest a directedness and uniqueness on the right side.
In fact, while our models will contain unique definitions for which one might write $:=$, we do not refer to \emph{the} definition.
The formal language can only express that certain propositions are \emph{equivalent by definition}.

A model in our framework is a Kripke model with a local \emph{definition function} $\DF_w$ on each world $w$, which assigns to each $p\in \Prop$ a definition as its (most thorough) meaning.
To stay as general as possible, we \emph{do not} assume any frame property in this paper.
Intuitively, if a proposition letter $p$ is assigned itself as the definition, this means that $p$ is \emph{self-evident} or truly basic.
Based on those definitions, we can then ``unravel'' each boolean formula in $\Lng_B$ to obtain its meaning by recursively applying $\DF_w$.
Further constraints are imposed to avoid circularity and make sure each non-self-evident proposition is assigned a definition using self-evident propositions.

\begin{definition}[Models]\label{def:models}
A \emph{premodel} $\M$ is a tuple $(W,R,V,\DF)$ where
  $W$ is a non-empty set of worlds,
  $R_i \subseteq W \times W$ is a relation for each agent $i$,
  $V \colon W \to \Prop \to \{0,1\}$ is a valuation function
  and $\DF \colon W \to \Prop \to \Lng_B$ is a definition function.

We write $V_w$ for the valuation at $w$ and lift it to $\Lng_B$ as usual by the
standard boolean semantics included in Definition~\ref{def:semantics} below.
Similarly, we lift the definition function $\DF_w$ at $w$ from $\Prop$ to $\Lng_B$
using definitions from the premodel for atoms and recursing over $\lnot$ and $\land$.
Formally, let $\df_w : \Lng_B  \to \Lng_B$ be defined by:
\[ \begin{array}{rcl}
  \df_w(p)             & := & \DF_w(p)\\
  \df_w(\lnot P)       & := & \lnot \df_w(P)\\
  \df_w((P_1 \land P_2)) & := & (\df_w(P_1) \land \df_w(P_2))\\
\end{array} \]
Intuitively, $\df_w(P)$ is obtained by replacing all the propositions letters in $P$ by their definitions.

A premodel is a \emph{model} iff we have for all worlds $w$:
\begin{itemize}
  \item For all $P,Q \in \Lng_B$:
    If $\df_w(P)=\df_w(Q)$, then $V_w(P) = V_w(Q)$.
  \item For all $p,q \in \Prop$:
    If $p$ is in $\DF_w(q)$, then $\DF_w(p) = p$.
\end{itemize}
\end{definition}

To connect definitions and truthvalues the first model constraint demands that whatever is definitionally equivalent is also assigned the same truth value.
We note that only demanding this condition for atomic propositions does not suffice for our purposes.

The second model constraint ensures well-foundedness:
Models never contain circular definitions like $p := (p \land q)$.
Moreover, they also do not contain chains of definitions that would imply such a
definition if they were unraveled, for example $p := r$ and $r := (p \land q)$.
While some of these might actually make sense as fixpoints or infinite
conjunctions, for now we do not allow them in our framework.

\begin{definition}[Semantics]\label{def:semantics}
We interpret $\Lng$ on models as follows.
\[ \begin{array}{lcl}
\M,w \vDash p               & \iff & V_w(p) = 1 \\
\M,w \vDash P_1 \equiv P_2  & \iff & \df_w(P_1) = \df_w(P_2) \\
\M,w \vDash \lnot \phi      & \iff & \text{not } \M,w \vDash \phi \\
\M,w \vDash (\phi \land \psi) & \iff & \M,w \vDash \phi \text{ and } \M,w \vDash \psi \\
\M,w \vDash \Box_i \phi     & \iff & \text{for all }v : w R_i v \text{ implies } \M,v \vDash \phi \\
\M,w \vDash [\phi] \psi     & \iff & \M,w \vDash \phi \text{ implies } \M^\phi,w \vDash \psi \\
\end{array} \]
where $\M^\phi$ is the restriction of $\M$ to the new set of worlds
$\{ v \in W \mid \M,v \vDash \phi \}$.
\end{definition}

In the second condition, the $=$ symbol on the right side is \emph{syntactic} equality within $\Lng_B$.
As we mentioned, parentheses in $\equiv$ formulas matter, e.g., $(p \land (q \land r)) \not\equiv ((p\land q) \land r)$ is valid.
In contrast $(p\land (q \land r)) \lra ((p \land q) \land r)$ is clearly valid, where we can omit the parentheses.
Note that all clauses besides the one for $\equiv$ are standard PAL as in~\cite{Plaza89:lopc,DitHoekKooi2007:del}.
In particular, the result of announcements is defined as usual, preserving the valuation and now also the local definitions at each world.

\subsection{Examples}

To illustrate our semantics and to show that it can describe various different scenarios, we now give some examples.

\begin{example}[Knowing without Understanding]\label{ex:knowing-without-understanding}
As mentioned in the introduction, you can know that something is true without knowing what it means and without knowing that its meaning is true.
Figure~\ref{fig:knowing-without-understanding} shows such a model.
We use undirected edges to indicate an equivalence relation and we omit the self-evident definitions for $q$ and $r$ in all three worlds.
At the actual world in the middle we have $\Box_i p \land (p \equiv q) \land \lnot \Box_i (p \equiv q)\land \neg \Box_i q$.
Moreover, even if $p \leftrightarrow q$ were announced, only the right world would be removed.
The agent would then know that $p \leftrightarrow q$ but still not know the stronger $p \equiv q$.
\end{example}

\begin{figure}[H]
\begin{minipage}[t]{0.5\textwidth}
\centering
\begin{tikzpicture}[>=latex,line join=bevel, node distance=3cm]
  \node (0) [draw,double] {$\begin{array}{l}
    p \\
    q \\
    \lnot r \\
    \hline
    p:=q
    \end{array}$};
  \node (1) [draw,right of=0] {$\begin{array}{l}
    p\\
    \lnot q\\
    r \\
    \hline
    p:=r
    \end{array}$};
  \draw (0) -- node[outer sep=1mm,above]{$i$} (1);
  \node (2) [draw,left of=0] {$\begin{array}{l}
    p\\
    q\\
    r \\
    \hline
    p:=r
    \end{array}$};
  \draw (0) -- node[outer sep=1mm,above]{$i$} (2);
\end{tikzpicture}
\caption{Knowing without Understanding}\label{fig:knowing-without-understanding}
\end{minipage}\begin{minipage}[t]{0.5\textwidth}
\centering
\begin{tikzpicture}[>=latex,line join=bevel, node distance=3cm]
  \node (0) [draw,double] {$\begin{array}{l}
    p \\
    q \\
    \hline
    p:=q
    \end{array}$};
  \node (1) [draw,right of=0] {$\begin{array}{l}
    \lnot p\\
    \lnot q\\
    \hline
    p:=q
    \end{array}$};
  \draw (0) -- node[outer sep=1mm,above]{$i$} (1);
\end{tikzpicture}
\caption{Understanding without Knowing}\label{fig:understanding-without-knowing}
\end{minipage}
\end{figure}

\begin{example}[Understanding without Knowing]\label{ex:understanding-without-knowing}
Conversely, an agent can know the meaning of a proposition but still not know whether it is true.
Both worlds in the model shown in Figure~\ref{fig:understanding-without-knowing} satisfy
$\Box_i (p \equiv q) \land \Box_i (p \leftrightarrow q) \land \lnot \Box_i p$.
\end{example}

\begin{example}[Understanding different parts]
Two agents can also have different partial knowledge of the meaning of some proposition.
The middle world of the model in Figure~\ref{fig:understanding-different-parts} satisfies
these three formulas:
\[ \Box_a(p \equiv (q \land r)) \land \Box_b(p \equiv (q \land r)) \]
\[ \Box_b(p \equiv (\lnot q_1 \land r)) \land \lnot \Box_a(p \equiv (\lnot q_1 \land r)) \]
\[ \Box_a(p \equiv (q \land \lnot r_1)) \land \lnot \Box_b(p \equiv (q \land \lnot r_1)) \]
Note that if the agents would combine their knowledge by announcing both partial meanings, they would arrive at the most thorough meaning, which is more informative than what either of them knows at the moment.
Formally, we have $[r \equiv \lnot r_1][q \equiv \lnot q_1] (\land_{i=a,b} \Box_i (p \equiv (\lnot q_1 \land (\lnot r_1 ))))$.

\begin{figure}[H]
\centering
\begin{tikzpicture}[>=latex,line join=bevel, node distance=4cm]
  \node (0) [draw] {$\begin{array}{l}
    \lnot p, \lnot q, r\\
    \lnot q_1, \lnot r_1\\
    \hline
    p := q \land \lnot r_1\\
    q := q\\
    r := \lnot r_1 \\
    \end{array}$};
  \node (1) [draw,right of=0,double] {$\begin{array}{l}
    p, q, r\\
    q_1, r_1\\
    \hline
    p := \lnot q_1 \land \lnot r_1\\
    q := \lnot q_1 \\
    r := \lnot r_1 \\
    \end{array}$};
  \draw (1) -- node[outer sep=1mm,above]{$a$} (0);
  \node (2) [draw,right of=1] {$\begin{array}{l}
    \lnot p, q, \lnot r\\
    \lnot q_1, \lnot r_1\\
    \hline
    p := \lnot q_1 \land r\\
    q := \lnot q_1\\
    r := r\\
    \end{array}$};
  \draw (1) -- node[outer sep=1mm,above]{$b$} (2);
\end{tikzpicture}
\caption{Understanding different parts}\label{fig:understanding-different-parts}
\end{figure}
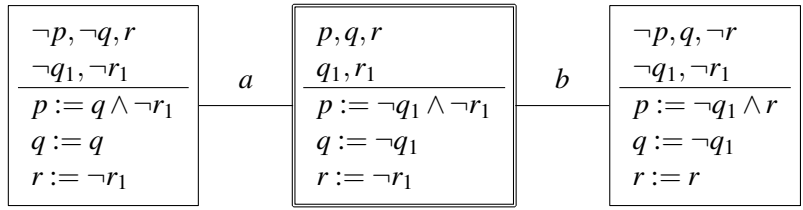

\begin{figure}[H]
\centering
\begin{tikzpicture}[>=latex,line join=bevel, node distance=3cm]
  \node (0) [draw] {$\begin{array}{l}
    p\\
    \lnot q\\
    r\\
    \hline
    p:=r
    \end{array}$};
  \node (1) [draw,right of=0,double] {$\begin{array}{l}
    p\\
    \lnot q\\
    \lnot r\\
    \hline
    p:=p
    \end{array}$};
  \draw[->] (1) -> node[outer sep=1mm,above]{$i$} (0);
  \node (2) [draw,right of=1] {$\begin{array}{l}
    p\\
    q\\
    \lnot r\\
    \hline
    p:=q\\
    \end{array}$};
  \draw[->] (1) -- node[outer sep=1mm,above]{$j$} (2);
\end{tikzpicture}
\caption{Consensus with misunderstanding}\label{fig:consens-misunder}
\end{figure}
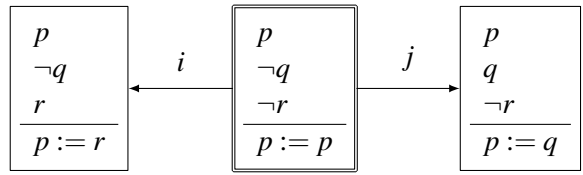
\end{example}

\begin{example}[Consensus with misunderstanding]
As in the \textit{Konnyaku Mondo} example, two agents can agree on something but actually have different beliefs about what it means.
In the model from Figure~\ref{fig:consens-misunder}, at the middle world, we have
  $p \land \Box_i p \land \Box_j p$
but also
  $\Box_i ((p \equiv r) \land r \land \lnot q)$
and
  $\Box_j ((p \equiv q) \land q \land \lnot r)$.
\end{example}

\section{Axiomatization}\label{sec.ax}

Note that the addition of $\equiv$ does not invalidate the standard reduction
axioms of PAL\@. We can rather think of $\phi \equiv \psi$ as a new atomic
proposition which is evaluated purely locally.

\begin{definition}\label{def:circularForm}
A formula $P \equiv Q$ is a \emph{circular} formula iff there is an atomic $p$ such that either
  (i) $P = p$ and $Q \neq p$ and $p$ occurs in $Q$
  or
  (ii) vice versa.
\end{definition}

For example the formulas $p \equiv (p \land q)$ and $\lnot q \equiv q$ are both circular, but $p \equiv p$ is not circular.

\begin{definition}[Axioms and Rules]\label{def:axioms}
We define the following proof system $\mathbb{SPALD}$ for $\Lng$.
We call the system without the PAL reduction axioms $\mathbb{SMLD}$.

\noindent\paragraph{Axiom Schemes}
    \begin{itemize}
      \item All propositional tautologies.
      \item The $K$ axiom:
        $\Box_i(\phi \to \psi) \to (\Box_i \phi \to \Box_i \psi)$
      \item PAL reduction axioms:
        \begin{itemize}
          \item $[\phi] p \leftrightarrow (\phi \to p)$
          \item $[\phi] (P \equiv Q) \leftrightarrow (\phi \to (P \equiv Q))$
          \item $[\phi] \lnot \psi \leftrightarrow (\phi \rightarrow \lnot[\phi]\psi)$
          \item $[\phi] (\psi \land \theta) \leftrightarrow ([\phi]\psi \land [\phi]\theta)$
          \item $[\phi] \Box_i \psi \leftrightarrow (\phi \rightarrow \Box_i(\phi \rightarrow [\phi]\psi))$
          \item $[\phi] [\psi] \xi \leftrightarrow [\phi \land [\phi]\psi]\xi$
        \end{itemize}
      \item Definition axioms
        \begin{itemize}
          \item Reflexivity:
            $(P \equiv P)$
          \item Symmetry:
            $(P \equiv Q) \to (Q \equiv P)$
          \item Transitivity:
            $((P \equiv Q) \land (Q \equiv R)) \to (P \equiv R)$
          \item Equivalence:
            $(P \equiv Q) \to (P \leftrightarrow Q)$
          \item Atomic \emph{occurrence} substitution:
            $((p \equiv Q) \land (R \equiv S)) \to (R \equiv [k \col p \mapsto Q]S)$
            where $k \col p$ denotes the $k$th occurrence of $p$ in $S$.
          \item Pattern $\lnot$:
            $(\lnot P \equiv \lnot Q) \leftrightarrow (P \equiv Q)$
          \item Pattern $\land$:
            $((P \land Q) \equiv (R \land S)) \leftrightarrow ((P \equiv R) \land (Q \equiv S))$
          \item Pattern mismatch: $\lnot P \not\equiv (Q \land R)$
          \item Non-circularity: $p \not\equiv P$ where $p$ occurs in $P$ but $P \neq p$
        \end{itemize}
    \end{itemize}
 \noindent\paragraph{Rules}
    \begin{itemize}
      \item Modus Ponens: from $\phi$ and $\phi \to \psi$ infer $\vdash \psi$.
      \item Necessitation for $\Box_i$: from $\vdash \phi$ infer  $\Box_i \phi$.
    \end{itemize}
\end{definition}

Somewhat non-standard in our axiom system is the usage of occurrence substitutions, in contrast to standard substitutions which usually replace all occurrences of a given atom.
We also use the notation $[k \col p \mapsto Q]$ in proofs in the next section and it will become more clear there.

We note that our logic does not allow replacement of equivalents in $\equiv$ formulas, hence this is not an admissible proof rule.
For example, $p \leftrightarrow (p \land p)$ is valid and so is $p \equiv p$, but $p \equiv (p \land p)$ is not valid and in fact a contradiction, by the non-circularity axiom.
Finally, we note that necessitation for $[\phi]$ is an admissible rule~\cite{WC12}.

\begin{theorem}
The axiom system $\mathbb{SPALD}$ from Definition~\ref{def:axioms} is sound for the semantics from Definition~\ref{def:semantics}.
\end{theorem}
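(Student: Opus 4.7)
The plan is to verify validity axiom-by-axiom and to check that each rule preserves validity. I would dispatch the standard parts quickly: propositional tautologies, the $K$ axiom, Modus Ponens and $\Box_i$-necessitation are sound for the usual modal reasons, since $\equiv$-formulas behave as local atoms whose truth value at $w$ depends only on $\df_w$. The six PAL reduction axioms are also mostly routine, because the submodel $\M^\phi$ leaves $V$ and $\DF$ unchanged at each surviving world, so $\df_w$ is identical in $\M$ and $\M^\phi$. This immediately justifies the new reduction $[\phi](P \equiv Q) \lra (\phi \to (P \equiv Q))$, while the other five schemes go through exactly as in standard PAL.

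For the definition axioms the common move is to read $\M, w \vDash P \equiv Q$ as the syntactic identity $\df_w(P) = \df_w(Q)$ in $\Lng_B$ and exploit that $\df_w$ commutes with $\lnot$ and $\land$ by construction. Reflexivity, symmetry, and transitivity then reduce to the corresponding properties of equality. The equivalence axiom $(P \equiv Q) \to (P \lra Q)$ is exactly the first model constraint once we lift $V_w$ to $\Lng_B$. The two pattern axioms follow because $\df_w$ preserves the top-level connective, so two formulas with the same outer shape have equal unravellings iff their corresponding arguments do; pattern mismatch falls out of the same observation since a negation formula is never syntactically identical to a conjunction. For non-circularity, if $p$ occurs in $P$ and $P \neq p$, then $P$ has $\lnot$ or $\land$ at its root and the corresponding occurrence of $p$ sits strictly inside $P$; because $\df_w$ commutes with $\lnot$ and $\land$, this occurrence becomes a proper subformula occurrence of $\df_w(p) = \DF_w(p)$ inside $\df_w(P)$, so the two cannot be syntactically equal.

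The one scheme where I expect real bookkeeping is atomic occurrence substitution. Assuming $\df_w(p) = \df_w(Q)$ and $\df_w(R) = \df_w(S)$, it suffices to prove $\df_w(S) = \df_w([k \col p \mapsto Q]S)$, which I would do by induction on $S$, carrying along the position of the $k$th $p$-occurrence. The inductive cases for $\lnot$ and $\land$ are handled by the commutation of $\df_w$ with the connectives, with the occurrence index descending into the appropriate subtree (and splitting between the two conjuncts whenever $p$ appears on both sides). The relevant leaf step replaces the single $\DF_w(p) = \df_w(p)$ in $\df_w(S)$ by $\df_w(Q)$ in $\df_w([k \col p \mapsto Q]S)$, and these are equal by the first hypothesis. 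I expect the main obstacle to be purely notational: giving a precise, induction-friendly definition of the occurrence-indexed substitution $[k \col p \mapsto Q]$ so that the argument goes through without off-by-one errors when both immediate subformulas contain copies of $p$.
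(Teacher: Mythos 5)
The paper states this theorem without giving a proof, so there is nothing to compare against; judged on its own, your argument is correct and covers every axiom and rule. You correctly identify the three load-bearing observations: that announcement preserves $V$ and $\DF$ at surviving worlds (so $\df_w$ agrees in $\M$ and $\M^\phi$, validating the new reduction axiom for $\equiv$); that $\df_w$ commutes with $\lnot$ and $\land$ together with unique readability of $\Lng_B$ (which gives the pattern, pattern-mismatch, and, via a length/proper-subformula argument, the non-circularity axioms); and that the first model constraint is precisely the equivalence axiom. The occurrence-substitution case is indeed the only one needing an induction, and your sketch of it is sound; the only detail left implicit is the degenerate case where $S$ has fewer than $k$ occurrences of $p$, in which $[k \col p \mapsto Q]S = S$ and the axiom holds trivially.
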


\section{Completeness}

To show the completeness of our axiomatization, we can first show the completeness of $\SMLD$ for announcement-free formulas.
Then by using the reduction axioms we can obtain the completeness of $\SPALD$ in the usual way~\cite{Plaza89:lopc}.
In the following, we write  $\vdash \phi$ if $\phi$ is provable in $\SMLD$.

Before we go on, note that our axioms enforce non-circularity but not well-foundedness.
That is, the set
  $\{ p_i \equiv p_{i+1}\land p_{i+2} \mid i \in \mathbb{N} \}$
is consistent according to the proof system above, but it does not have a model, since you cannot give definitions to $p_i$ using some self-evident atoms.
However, any finite subset of this set has a model, hence our logic is not compact and we cannot show strong completeness.

Fortunately, we can still show completeness because any finite set of formulas only uses finitely many atomic propositions.

For the rest of this section, we fix a finite vocabulary $\Prop$.

\begin{example}\label{ex:growNonCirc}
Consider the following three formulas and an infinite sequence of consequences:
\[ \begin{array}{ll}
  p \equiv q \land r\\
  q \equiv p \land r\\
  s \equiv p\\
  \midrule
  s \equiv q \land r\\
  s \equiv (p \land r) \land r\\
  s \equiv ((q \land r) \land r) \land r\\
  \vdots\\
\end{array} \]
Note that none of these formulas alone is circular.
However, it is easy to spot another consequence $p \equiv (p \land r) \land r$ which is circular.
Hence the original set of three formulas is inconsistent.
\end{example}

The main idea for our completeness proof is that Example~\ref{ex:growNonCirc} is not an exception:
Whenever a set of equivalent definitions is infinite we can systematically derive a circular formula.
Before stating our central Lemma~\ref{lem:finiteDefSet} we need a few more definitions.

\begin{definition}
  For each boolean formula $P \in \Lng_B$ we define its \emph{length} $l(P)$ as follows:
  \[ \begin{array}{lcl}
    l(p)         & := & 1 \\
    l(\neg P)    & := & l(P) + 1 \\
    l((P \land Q)) & := & l(P) + l(Q) + 3
  \end{array} \]
  Additionally, we define its \emph{vocabulary} $v(P)$ as follows:
  \[ \begin{array}{lcl}
    v(p)         & := & \{ p \} \\
    v(\neg P)    & := & v(P) \\
    v((P \land Q)) & := & v(P) \cup v(Q)
  \end{array} \]
\end{definition}

As an example, $l(\lnot p) = 2$ and $l(p\land (p\land q)) = 9$.
Note that the parentheses also count.

\begin{definition}
Given a maximally consistent set $\Gamma \subseteq \Lng(\Prop)$, we define a relation over $\Lng_B(\Prop)$ by $P \equiv_\Gamma Q \ :\iff \ (P \equiv Q) \in \Gamma$.
Per relevant axioms in Definition~\ref{def:axioms} this is an equivalence relation.
For each $P \in \Lng_B(\Prop)$ we denote its $\equiv_\Gamma$-equivalence class by
\[ {[P]}_\Gamma = \{ Q \in \Lng_B(\Prop) \mid (P \equiv Q) \in \Gamma \} \]
and call it the \emph{set of $\Gamma$-definitions} of $P$.
\end{definition}

\begin{lemma}\label{lem:finiteDefSet}
For each maximally consistent set $\Gamma \subseteq \Lng(\Prop)$ and each atomic proposition $p \in \Prop$, the set ${[p]}_\Gamma$ is finite.
\end{lemma}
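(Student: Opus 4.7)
The plan is to prove the lemma by well-founded induction on a dependency relation on the finite set $\Prop$, exploiting the structural constraints imposed by the Pattern axioms together with Non-circularity. Since ${[p]}_\Gamma$ contains at most $|\Prop|$ atoms, the work lies in bounding the number of complex members. Here the Pattern axioms tell us that all complex members of ${[p]}_\Gamma$ share a single top-level connective (Pattern mismatch), and that this class of complex members is in bijection with ${[T_0]}_\Gamma$ (if $\lnot T_0 \in {[p]}_\Gamma$) or with ${[T_0]}_\Gamma \times {[S_0]}_\Gamma$ (if $(T_0 \land S_0) \in {[p]}_\Gamma$), via Pattern $\lnot$ and Pattern $\land$. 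So it suffices to bound $|{[R]}_\Gamma|$ for strict subformulas $R$ of complex representatives of ${[p]}_\Gamma$.

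Define $q \prec p$ iff $q \in v(P)$ for some complex $P \in {[p]}_\Gamma$. The main obstacle is showing that $\prec$ is acyclic; together with finiteness of $\Prop$ this gives well-foundedness. Suppose for contradiction there is a cycle $p_1 \prec p_2 \prec \cdots \prec p_n \prec p_1$, witnessed by complex $P_i \in {[p_i]}_\Gamma$ with $p_{i-1} \in v(P_i)$ (indices mod $n$). Starting from $P_1 \equiv_\Gamma p_1$, I would apply the Atomic occurrence substitution axiom iteratively: substitute one occurrence of $p_n$ by $P_n$, then an occurrence of $p_{n-1}$ by $P_{n-1}$, and so on down to $p_2 \mapsto P_2$. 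Each substitution preserves $\equiv_\Gamma$-equivalence with $p_1$ and keeps the formula complex, while bringing the atoms of the inserted $P_i$ into the vocabulary. After the final step, $p_1$ itself lies in the vocabulary of the resulting complex $P^\ast \in {[p_1]}_\Gamma$, directly contradicting the Non-circularity axiom.

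With $\prec$ well-founded, I would do outer induction on $\prec$: assuming $|{[q]}_\Gamma|$ is finite for every atom $q$ strictly below $p$, show $|{[p]}_\Gamma|$ is finite. Let $D(p)$ denote the set of atoms strictly below $p$ in the transitive closure of $\prec$. By Non-circularity, every strict subformula of a complex member of ${[p]}_\Gamma$ has vocabulary contained in $D(p)$. The inductive step therefore reduces to an inner induction on $l(R)$ proving the stronger claim: for every $R \in \Lng_B$ with $v(R) \subseteq D(p)$, the class ${[R]}_\Gamma$ is finite. The base case ($R$ an atom) is immediate from the outer hypothesis; the inductive step ($R$ complex) appeals once more to Pattern $\lnot$ and Pattern $\land$ to bound $|{[R]}_\Gamma|$ by $|\Prop|$ plus a product of sizes of classes of strict subformulas, whose vocabularies remain in $D(p)$, so the inner hypothesis applies. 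Combining the two inductions, $|{[p]}_\Gamma|$ is finite.
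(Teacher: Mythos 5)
Your argument is correct, but it reaches the conclusion by a genuinely different route than the paper. The paper proceeds by contradiction: assuming ${[p]}_\Gamma$ is infinite, it uses the $\mrg$ operation to convert an arbitrary unbounded chain of $\Gamma$-definitions into one whose successive elements are related by occurrence substitutions, arranges these substitutions into an infinite, finitely branching tree, and applies K\H{o}nig's Lemma to obtain a branch on which some atom is substituted twice, which yields a circular formula. You instead argue directly: you first establish acyclicity of the definitional dependency relation $\prec$ on the finite set $\Prop$ by chaining occurrence substitutions around a putative cycle until Non-circularity is violated (this is, in essence, the finite core of the paper's ``repeated atom along a branch'' step, performed up front on atoms rather than discovered inside an infinite tree), and you then run a well-founded outer induction on $\prec$ with an inner induction on length, using Pattern $\lnot$, Pattern $\land$ and Pattern mismatch to decompose the complex part of each class ${[R]}_\Gamma$ according to its unique top-level connective. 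Your route dispenses with K\H{o}nig's Lemma and the bookkeeping of the substitution tree, and as a by-product gives explicit recursive bounds of the form $|{[p]}_\Gamma| \leq |\Prop| + |{[T_0]}_\Gamma|\cdot|{[S_0]}_\Gamma|$; the paper's route avoids setting up the dependency order in advance, and its tree of substitutions is reused informally later when the canonical definition $\pick({[p]}_\Gamma)$ is described as the result of applying all possible substitutions down to the leaves. Two small points to tidy when writing this up: treat the length-one cycle $p \prec p$ separately (it is immediately a circular formula, with no substitutions needed), and note that the inclusion of the vocabulary of a complex member of ${[p]}_\Gamma$ in $D(p)$ is just the definition of $\prec$ --- Non-circularity is what you need to guarantee $p \notin D(p)$, i.e., that the induction is well-founded.
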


To prove Lemma~\ref{lem:finiteDefSet}, we first need some definitions and notation.
We fix an enumeration of $\Prop$, say alphabetically.
This induces a lexicographic ordering $<$ over $\Lng_B$.
For example, we have $p < q$, therefore also $\lnot p < \lnot q$ and similarly for conjunctions.

\begin{definition}\label{def:mrg}
Let $\mrg \colon \Lng_B\times \Lng_B \to \Lng_B$ be defined as follows:
\[ \begin{array}{lcll}
\mrg(p,q) & := & \text{if } p < q \text{ then } p \text{ else } q \\
\mrg(p,Q)\ \text{ when } Q\not=q & := & Q \\
\mrg(P,q) \ \text{ when }P\not=p & := & P \\
\mrg((P \land Q), (R \land S)) & := & (\mrg(P,R) \land \mrg(Q,S)) \\
\mrg(\lnot P, \lnot R) & := & \lnot \mrg(P,R) \\[0.5em]
\mrg(\lnot P, (Q \land R)) & := & \text{undefined} \\
\mrg((Q \land R), \lnot P) & := & \text{undefined} \\
\end{array} \]
\end{definition}

By definition, $\mrg$ is symmetric: $\mrg(P,Q)=\mrg(Q,P)$.

\begin{example}
  We have $\mrg((p \land (q \land r)), (\lnot s \land t)) = (\lnot s \land (q \land r))$.
\end{example}
It is also easy to see that, viewed as  term-rewriting rules, $\mrg$ always terminates, since the recursive clauses reduce the complexity of the formulas.
\begin{definition}
  Let $[ k \col p \mapsto P] Q$ denote the result of replacing the $k$-th occurrence of $p$ in $Q$ with $P$.
  For multiple such \emph{occurrence substitutions}, let $[k \col p \mapsto P \ \oplus \ k' \col p' \mapsto R] Q$ denote their simultaneous application to $Q$.
\end{definition}
\begin{example}
  We have $[ 2 \col p \mapsto (q \land r)] (p \land p) \ = (\ p \land (q \land r))$.
  As an example of two simultaneous occurrence substitutions, we have
  $ [ 2 \col p \mapsto (q \land r) \ \oplus \ 1 \col q \mapsto \lnot r ]
    ((p \land p) \land q)
    \ = \
    ((p \land (q \land r)) \land \lnot r) $.
\end{example}

\begin{lemma}\label{lem:mergeNicepre}
  For all $P, Q$ such that $P\equiv Q\in \Gamma$ we have:
  \begin{enumerate}[(i)]
  \item $\mrg(P,Q)\equiv P \in \Gamma$
  \item There are
    indices $k_1,\ldots,k_n \in \mathbb{N}$,
    atoms $p_1,\ldots,p_n \in \Prop$ and
    formulas $R_1,\ldots,R_n \in \Lng_B$ for some fixed $n \geq 0$
    such that we have
    $\mrg(P,Q) = [k_n \col p_n \mapsto R_n \ \oplus \ldots \oplus \ k_1 \col p_1 \mapsto R_1] P$ and $p_i
    \equiv R_i\in \Gamma$ for all $i\leq n$.
    (Note that $n = 0$ iff $\mrg(P,Q)=P$.)
  \end{enumerate}
\end{lemma}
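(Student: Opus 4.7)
The plan is to prove both parts simultaneously by induction on $l(P) + l(Q)$, with a case analysis on the outermost structure of $P$ and $Q$ that mirrors the defining clauses of $\mrg$. At each step, the pattern axioms of $\SMLD$ will decompose the hypothesis $P \equiv Q \in \Gamma$ into premises that fit the inductive structure, and the occurrence substitutions needed for (ii) will be assembled from those produced by the inductive calls.

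The base and near-base cases are direct. If $P = p$ and $Q = q$ are both atomic: when $p = q$ or $\mrg(P,Q) = P$, take $n = 0$; otherwise $\mrg(P,Q) = q$ and the single occurrence substitution $[1\col p \mapsto q]$ transforms $P$ into $\mrg(P,Q)$, with side condition $p \equiv q \in \Gamma$ given by hypothesis (using symmetry if the hypothesis reads the other way). Part (i) follows because the element of $\{P, Q\}$ chosen by $\mrg$ is $\equiv$-linked to $P$ in $\Gamma$ by reflexivity or symmetry. The same pattern works when exactly one of $P, Q$ is atomic: $\mrg$ returns the compound one, part (i) is immediate from the hypothesis and symmetry, and a single substitution $[1\col p \mapsto Q]$ produces it from the atomic side.

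For the recursive cases, suppose $P = \lnot P'$ and $Q = \lnot Q'$. The pattern $\lnot$ axiom yields $P' \equiv Q' \in \Gamma$; the inductive hypothesis gives (i) and (ii) for $(P', Q')$, and another application of pattern $\lnot$ lifts (i) to the negated form, while the substitutions from (ii) transfer verbatim into $\lnot P'$ since prefixing a $\lnot$ does not change the occurrence count of any atom. Suppose $P = (P_1 \land P_2)$ and $Q = (Q_1 \land Q_2)$. The pattern $\land$ axiom splits the hypothesis into $P_1 \equiv Q_1 \in \Gamma$ and $P_2 \equiv Q_2 \in \Gamma$, induction supplies substitutions transforming each $P_j$ into $\mrg(P_j, Q_j)$, and a further application of pattern $\land$ gives (i). The mismatched cases, where one of $P, Q$ is a negation and the other a conjunction, cannot occur: the pattern mismatch axiom would place $P \not\equiv Q$ in $\Gamma$, contradicting consistency. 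In particular, $\mrg(P, Q)$ is always defined under our hypothesis.

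The main technical obstacle is the bookkeeping in the conjunction case of (ii). The substitutions from the left conjunct, say $[k_i \col p_i \mapsto R_i]$ acting on $P_1$, keep their occurrence indices when viewed inside $(P_1 \land P_2)$. The substitutions from the right conjunct, however, must each be relabeled: a substitution $[k \col p \mapsto R]$ coming from $P_2$ becomes $[(k + c_p(P_1)) \col p \mapsto R]$ in $(P_1 \land P_2)$, where $c_p(P_1)$ denotes the number of occurrences of $p$ in $P_1$. One then checks that simultaneously applying the concatenated (and relabeled) list of substitutions to $(P_1 \land P_2)$ yields exactly $(\mrg(P_1,Q_1) \land \mrg(P_2,Q_2)) = \mrg(P,Q)$, and that all side conditions $p_i \equiv R_i \in \Gamma$ are inherited unchanged from the two inductive calls. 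This is a careful but routine verification that the indexing scheme for $[\,\cdot \oplus \cdot\,]$ interacts correctly with the parenthesized structure on which $\mrg$ operates.
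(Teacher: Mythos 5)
Your proof is correct and follows essentially the same route as the paper: structural induction driven by the defining clauses of $\mrg$, using the pattern axioms to decompose $P \equiv Q \in \Gamma$ into subgoals and the pattern-mismatch axiom together with consistency of $\Gamma$ to rule out the undefined cases. You are in fact more explicit than the paper about the one genuinely delicate point, namely relabelling the occurrence indices of substitutions coming from the right conjunct in the $\land$-case of (ii), which the paper dispatches with ``the case of $P=(P_1 \land P_2)$ is similar.''
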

\begin{proof}
Since $\Gamma$ is consistent, if $P\equiv Q\in \Gamma$ then $\mrg(P,Q)$ is always defined, based on the axioms of pattern mismatch.

For (i) we do induction on the structure of $P$.

Suppose $P=p$, then $\mrg(P, Q)$ is $p$ or $Q$.
Then it is straightforward that $\mrg(P, Q)\equiv P \in \Gamma$, since $P\equiv Q\in \Gamma$.

Suppose $P = \neg P'$.
Since $P\equiv Q\in \Gamma$, the shape of $Q$ is either $q$ or $\neg Q'$ due to the pattern mismatch axiom and the fact that $\Gamma$ is consistent.
The first case reduces to the above case due to the axiom of symmetry and the fact that $\mrg$ is symmetric.
For the second case, by pattern inference we have $P'\equiv Q'\in \Gamma$.
Now by induction hypothesis, $\mrg(P', Q')\equiv P'\in\Gamma$.
Therefore $\neg\mrg(P',Q')\equiv \neg P'\in\Gamma$ by pattern inference axiom, namely $\mrg(P,Q)\equiv P\in \Gamma$.

The case of $P=(P_1\land P_2)$ is similar.

For (ii), we also do induction on the structure of $P$.

Suppose $P=p$, then $\mrg(P, Q)$ is $p$ or $Q$.
In the first case we just need a trivial substitution $[1 \col p \mapsto p]$ since $p \equiv p \in \Gamma$.
In the second case we take $[1 \col p \mapsto Q]$ since $p\equiv Q\in \Gamma$.

Suppose $P=\neg P'$.
As in the proof of (i) we can show that $Q$ is in the shape of either $q$ or $\neg Q'$.
For the first case, $\mrg(P, q)= P=\neg P'$ by definition of $\mrg$.
Then we can take the trivial substitution to prove the claim.
In the second case, $\mrg(P, Q)=\neg \mrg(P', Q')$.
Since $P\equiv Q\in \Gamma$, $P'\equiv Q'\in\Gamma$ by pattern inference.
Now by induction hypothesis, there are substitutions to turn $P'$ into $\mrg(P', Q')$.
The same substitutions can also turn $P=\neg P'$ into $\mrg(P, Q)=\neg\mrg(P',Q')$.

Again the case of $P=(P_1 \land P_2)$ is similar.
\end{proof}

\begin{lemma}\label{lem:mergeNice}
  For all $p \in \Prop$ and all $P, Q \in {[p]}_\Gamma$ we have:
  \begin{enumerate}[(i)]
  \item $\mrg(P,Q) \in {[p]}_\Gamma$
  \item $l(\mrg(P,Q)) \geq \max(l(P), l(Q))$
  \item There are
    indices $k_1,\ldots,k_n \in \mathbb{N}$,
    atoms $p_1,\ldots,p_n \in \Prop$ and
    formulas $R_1,\ldots,R_n \in \Lng_B$ for some fixed $n \geq 0$
    such that we have
    $\mrg(P,Q) = [k_n \col p_n \mapsto R_n \ \oplus \ldots \oplus \ k_1 \col p_1 \mapsto R_1] P$ and $p_i
    \equiv R_i\in \Gamma$ for all $i\leq n$.
  \end{enumerate}
\end{lemma}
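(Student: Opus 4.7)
The plan is to reduce parts (i) and (iii) to Lemma~\ref{lem:mergeNicepre}, and to establish part (ii) by a direct structural induction on $P$.

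First I would observe that if $P, Q \in {[p]}_\Gamma$, then reflexivity, symmetry, and transitivity of $\equiv_\Gamma$ (built into the axiom system) give $(P \equiv Q) \in \Gamma$. Once this is in hand, Lemma~\ref{lem:mergeNicepre}(i) delivers $(\mrg(P,Q) \equiv P) \in \Gamma$, and combining this with $(p \equiv P) \in \Gamma$ by transitivity yields $\mrg(P,Q) \in {[p]}_\Gamma$, which is (i). Part (iii) is then obtained by applying Lemma~\ref{lem:mergeNicepre}(ii) to the same pair $P, Q$; the required representation of $\mrg(P,Q)$ as a sequence of occurrence substitutions on $P$ is literally the conclusion of that earlier lemma.

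For part (ii) I would proceed by induction on the structure of $P$, using the symmetry $\mrg(P,Q) = \mrg(Q,P)$ to reduce cases. In the base case $P = p'$, either $Q$ is also atomic, in which case $\mrg(P,Q)$ is one of the two atoms and has length $1 = \max(l(P), l(Q))$, or $Q$ is compound, in which case $\mrg(P,Q) = Q$ and $l(\mrg(P,Q)) = l(Q) \geq 1 = l(P)$, again matching the claimed bound. In the inductive step, the pattern-mismatch axiom together with consistency of $\Gamma$ (via $(P \equiv Q) \in \Gamma$, already noted) forces $Q$ to be either atomic (subsumed by the base case and symmetry) or to share the top-level connective of $P$. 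The $\lnot$-case follows immediately from the inductive hypothesis since $l$ adds exactly $1$ on both sides. The $\land$-case uses the elementary inequality $\max(a_1,b_1) + \max(a_2,b_2) \geq \max(a_1+a_2,\, b_1+b_2)$ applied to the two recursive merges, with the additive constant $3$ appearing on both sides.

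I do not anticipate any real obstacle: the nontrivial content for (i) and (iii) has already been packaged into Lemma~\ref{lem:mergeNicepre}, so the remaining work is small bookkeeping through $\equiv_\Gamma$-transitivity. The only point in (ii) that requires genuine care is the atomic base case, where $\mrg(p', Q)$ with $Q$ compound returns all of $Q$; it is precisely here that the statement is formulated with $\max$ on the right-hand side rather than something additive, and one has to check that the chosen bound accommodates this mixed situation cleanly.
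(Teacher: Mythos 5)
Your proof is correct and takes essentially the same route as the paper: parts (i) and (iii) are reduced to Lemma~\ref{lem:mergeNicepre} via transitivity of $\equiv_\Gamma$, and part (ii) is done by induction over the clauses of $\mrg$. The paper dismisses (ii) as ``a simple induction''; your case analysis (in particular the inequality $\max(a_1,b_1)+\max(a_2,b_2)\geq\max(a_1+a_2,b_1+b_2)$ for the conjunction clause) supplies exactly the missing details and is sound.
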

\begin{proof}
For (i): Suppose $P, Q \in {[p]}_\Gamma$.
Then $P \equiv Q \in \Gamma$ and according to part (i) of Lemma~\ref{lem:mergeNicepre} we have $\mrg(P,Q) \equiv P \in \Gamma$.
Since $P\in {[p]}_\Gamma$, thus by transitivity axiom, $\mrg(P,Q)\in{[p]}_\Gamma$.

For (ii): a simple induction on the clauses of $\mrg$ suffices.

Finally, (iii) is a special case of part $(ii)$ in Lemma~\ref{lem:mergeNicepre} since $P,Q \in {[p]}_\Gamma$ implies $P \equiv Q \in \Gamma$.
\end{proof}

Intuitively, Lemma~\ref{lem:mergeNice} says that the result of merging two $\Gamma$-definitions of $p$
(i) is also a $\Gamma$-definition of $p$,
(ii) is at least as long as the longest given formula, and
(iii) can be reached by replacing atom occurrences step by step.

In fact, the occurrence substitutions given by (iii) are unique up to enumeration and trivial substitutions of the form $[k \col p \mapsto p]$.
Moreover, the pattern matching axioms imply that $R_i \in {[p_i]}_\Gamma$.

\begin{proof}[Proof of Lemma~\ref{lem:finiteDefSet}]
Suppose ${[p]}_\Gamma$ is infinite.

Then in particular the length of formulas in ${[p]}_\Gamma$ is unbounded, because $\Prop$ is finite.
Hence there must be an infinite chain $P_0 = p, P_1, P_2, \ldots$ in ${[p]}_\Gamma$ which is unbounded in length.
Note that there might be big ``jumps'' in length and in general the formulas will not be related systematically.

To deduce a circular formula and thus a contradiction, we now define a second chain $Q_0,Q_1,\ldots$ using $\mrg$.
Let $Q_0 := P_0 = p$ and for all $k \geq q$ let $Q_k := \mrg(Q_{k-1},P_k)$.
From Lemma~\ref{lem:mergeNice} we now get that
(i) the chain of $Q_i$s is also a chain in ${[p]}_\Gamma$,
(ii) $l(Q_i) \geq l(P_i)$ and thus this chain is also unbounded in length, and
(iii) for each step from $Q_i$ to $Q_{i+1}$ there are finitely many atoms that are replaced with longer formulas.

Let us list such a sequence of substitutions as:
\[ Q_0 \xrightarrow{\overline{k\col p\mapsto R}} Q_1 \xrightarrow{\overline{m\col q\mapsto T}} Q_2 \ \cdots \]
where $\overline{k\col p\mapsto R}$ denotes the simultaneous substitution $[k_{n} \col p_n \mapsto R_n \ \oplus \ldots \oplus \ k_1 \col p_1 \mapsto R_1]$ for some $n$.
We can denote each single substitution as ${[k:p\mapsto R]}^i$ where the superscript $i$ means the substitution happens at $Q_i$.

Now let $(N,\rightsquigarrow)$ be the graph where
  $N$ is the set of all occurrence substitutions in the $Q$ chain (indexed with superscripts) and
  there is an edge ${[m \col p \mapsto R]}^i \rightsquigarrow {[n \col q \mapsto S]}^j$ iff $i<j$ and $n \col q$ is an occurrence within $R$ of $Q_{i+1}$.
Then $(N,\rightsquigarrow)$ is a tree with the first substitution ${[1 \col p \mapsto P_1]}^0$ as its root.
Intuitively, we have an edge from one substitution to another iff the second ``happens within'' the result of the first.
This tree of substitutions is illustrated in Figures~\ref{fig:subTreeLine} and~\ref{fig:subTreeExample} below as examples.

The $Q_i$ chain of formulas is infinite, hence there are infinitely many substitutions and $N$ is infinite.
However, each occurrence of an atomic proposition can be replaced with a longer formula only once.
Hence the number of children of each node in $(N,\rightsquigarrow)$ is bounded by the length of the formula.
Formally, each node ${[k \col p \mapsto R]}^i$ can only have as many children as there are occurrences of atoms in $R$.

Together, $(N,\rightsquigarrow)$ is an infinite but also finitely branching tree.
By Kőnig's Lemma~\cite{Koenig1927,Fran1997:OrigKoenig} there must be an infinite branch.
In particular, there must be a branch longer than $|\Prop|$ and there must be two substitutions of the same atom along this branch.
We now use this branch to derive a circular formula in $\Gamma$.

Let $R_i$ be the sequence of formulas starting with $R_0 := p$ and then applying the substitutions from the branch.
For each $i$, let ${[k_i \col q_i \mapsto S_i]}^{\iota_i}$ be the substitution at node $i$ happening at $Q_{\iota_i}$.
Note that the branch need not have a node at every level, hence $\iota_i \geq i$ but not necessarily $\iota_i = i$.

Because $|\Prop|$ is finite there must be $j < k$ such that $q_j = q_k =: q$.
Then we have $q \in R_j$ and $q \in R_k$ and $S_j, S_k \in {[q]}_\Gamma$.
Now consider the sequence of substitutions $R$s:
\begin{center}
\begin{tabular}{rcl}
  $q$ occurs in $R_j$\\
  $q$ does not occur in $R_{j+1}$ & = & ${[k_{j+1} \col q \mapsto S_j]}^{\iota_j} R_j$ \\[1ex]
  \vdots & & \begin{minipage}{8cm}
    $q$ must be reintroduced by some step ${[k_m \col q_m \mapsto S_m]}^{\iota_m}$
    \end{minipage}\\
  $q$ occurs in $R_k$\\
  $q$ does not occur in $R_{k+1}$ & = & ${[k_{k+1} \col q \mapsto S_k]}^{\iota_k} R_k$ \\
\end{tabular}
\end{center}
In particular, because these substitutions happen along the same branch, $k_m \col q_m$ is an occurrence in $Q_j$.
Hence, reasoning inside $\Gamma$ we have
$q \equiv Q_j$
and
$q \equiv {[k_m \col q_m \mapsto Q_m]}^{\iota_m} Q_j$.
But because $q$ occurs in $Q_m$, the latter is a circular formula in $\Gamma$.
Contradiction!
\end{proof}

To illustrate our proof method, we give two examples.

\begin{example}
The chain of $Q_i$ might for example be all right parts of consequences in Example~\ref{ex:growNonCirc}.
That is, in the set ${[s]}_\Gamma$ we have
  $Q_0 = p$,
  $Q_1 = (q \land r)$,
  $Q_2 = ((p \land r) \land r)$,
  $Q_3 = (((q \land r) \land r) \land r)$,
and so on.
The sequence of occurrence substitutions is then
${[1 \col p \mapsto (q \land r)]}^0$,
${[1 \col q \mapsto (p \land r)]}^1$,
${[1 \col p \mapsto (q \land r)]}^2$,
and so on.
This is an easy case: each step replaces an occurrence within the previous substituens.
Hence the tree of substitutions shown in Figure~\ref{fig:subTreeLine} only has one branch.

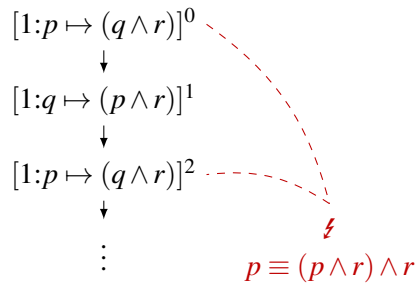
\begin{figure}[hb]
  \centering
  \begin{tikzpicture}[>=latex]
    \node (p1)               {$[1 \col p \mapsto (q \land r)]^0$};
    \node (q1) [below of=p1] {$[1 \col q \mapsto (p \land r)]^1$};
    \node (p2) [below of=q1] {$[1 \col p \mapsto (q \land r)]^2$};
    \node (dots) [below of=p2] {\vdots};
    \draw [->] (p1) -- (q1);
    \draw [->] (q1) -- (p2);
    \draw [->] (p2) -- (dots);
    \node (light) [myRed,right of=dots,node distance=3cm]
          {$\begin{array}{c}
              \text{\Lightning}\\
              p \equiv (p \land r) \land r
            \end{array}$};
    \draw [myRed,dashed] (p1.east) edge [bend left=20] (light.north);
    \draw [myRed,dashed] (light.north) edge [bend right=20] (p2.east);
  \end{tikzpicture}
  \caption{A simple substitution tree and the resulting circular formula.}\label{fig:subTreeLine}
\end{figure}
\end{example}

\begin{example}
An example which yields a proper tree is shown in Figure~\ref{fig:subTreeExample}.
On the left side we first show the $P_i$ chain.
Note that its formulas are strictly increasing in complexity, but for example $P_2$ and $P_3$ are not directly related via substitutions.
The second chain $Q_i$ is obtained using the $\mrg$ function from Definition~\ref{def:mrg} and restores this property.
For example, we can go from $Q_2$ to $Q_3$ via the substitution ${[2 \col r \mapsto \lnot \lnot s]}^2$.
All those substitutions are then arranged in the $(N,\rightsquigarrow)$ tree.
Finally, we show how two substitutions of the same proposition ($r$) in the same branch lead to a circular formula ($r \equiv \lnot \lnot (r \imp p)$).

\begin{figure*}
  \centering
  \begin{tikzpicture}[>=latex, node distance=0.8cm]
    % P_i chain
    \node (P0) [text width=4.5cm] {$P_0 = p$};
    \node (P1) [below of=P0, text width=4.5cm] {$P_1 = q \land r$};
    \node (P2) [below of=P1, text width=4.5cm] {$P_2 = (u \lor r) \land r$};
    \node (P3) [below of=P2, text width=4.5cm] {$P_3 = q \land \lnot \lnot s$};
    \node (P4) [below of=P3, text width=4.5cm] {$P_4 = (u \lor \lnot \lnot s) \land \lnot \lnot (r \imp p) $};
    \node (P5) [below of=P4, text width=4.5cm] {$P_5 = q \land \lnot \lnot (\lnot \lnot s \imp p)$};
    \node (PP) [below of=P5] {\vdots};
    % Q_i chain
    \node (Q0) [right of=P0, text width=5cm, node distance=5cm] {$Q_0 = p$};
    \node (Q1) [below of=Q0, text width=5cm] {$Q_1 = q \land r$};
    \node (Q2) [below of=Q1, text width=5cm] {$Q_2 = (u \lor r) \land r$};
    \node (Q3) [below of=Q2, text width=5cm] {$Q_3 = (u \lor r) \land \lnot \lnot s$};
    \node (Q4) [below of=Q3, text width=5cm] {$Q_4 = (u \lor \lnot \lnot s) \land \lnot \lnot (r \imp p) $};
    \node (Q5) [below of=Q4, text width=5cm] {$Q_5 = (u \lor \lnot \lnot s) \land \lnot \lnot (\lnot \lnot s \imp p)$};
    \node (QQ) [below of=Q5] {\vdots};
    % invisible nodes for better arrangement
    \node (1) [above of=Q1, node distance=0.5cm] {};
    \node (2) [above of=Q2, node distance=0.5cm] {};
    \node (3) [above of=Q3, node distance=0.5cm] {};
    \node (4) [above of=Q4, node distance=0.5cm] {};
    \node (5) [above of=Q5, node distance=0.5cm] {};
    % tree
    \node (p)    [right of=1, node distance=5.5cm] {$[1 \col p \mapsto (q \land r)]^0$};
    \node (pq)   [right of=2, node distance=4cm] {$[1 \col q \mapsto (u \lor r)]^1$};
    \node (pr)   [right of=3, node distance=7cm] {$[2 \col r \mapsto (\lnot \lnot s)]^2$};
    \node (pqr)  [right of=4, node distance=4cm] {$[1 \col r \mapsto \lnot \lnot s]^3$};
    \node (prs)  [right of=4, node distance=7cm] {$[1 \col s \mapsto (r \imp p)]^3$};
    \node (prsr) [right of=5, node distance=7cm] {$[1 \col r \mapsto \lnot\lnot s]^4$};
    \node [below of=pqr] {\vdots};
    \node [below of=prsr] {\vdots};
    \draw [->] (p) -- (pq);
    \draw [->] (pq) -- (pqr);
    \draw [->] (p) -- (pr);
    \draw [->] (pr) -- (prs);
    \draw [->] (prs) -- (prsr);
    \node (light) [myRed,left of=prsr,below of=prsr, node distance=1.6cm]
          {$\begin{array}{c}
              \text{\Lightning}\\
              r \equiv \lnot\lnot (r \imp p)
            \end{array}$};
    \draw [myRed,dashed] (pr.west) edge [bend right=20] (light.north);
    \draw [myRed,dashed] (light.north) edge [bend left=20] (prsr.west);
  \end{tikzpicture}
  \caption{A definition chain and the resulting occurrence substitution tree. The right branch yields a circular formula.}\label{fig:subTreeExample}
\end{figure*}
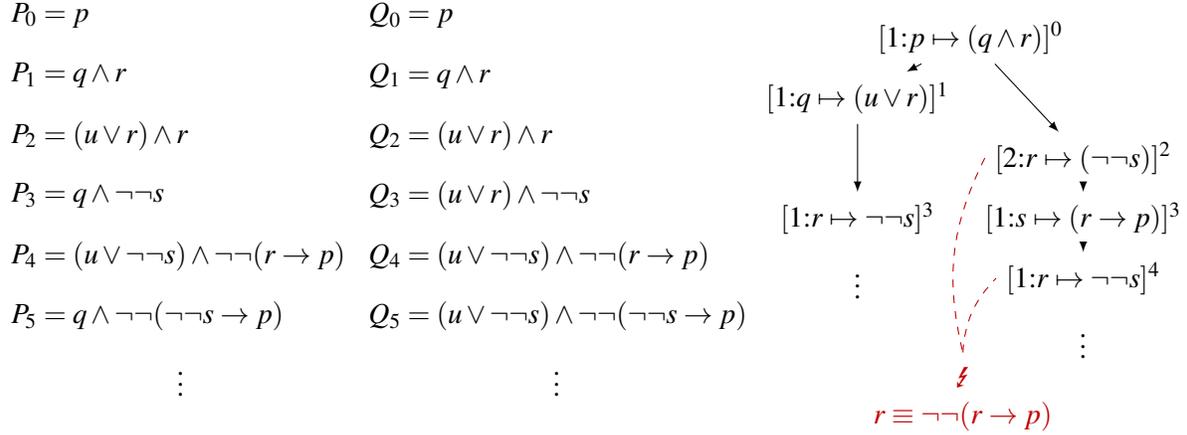
\end{example}

Given that all ${[p]}_\Gamma$ sets are finite, we can now choose a definition for each $p \in \Prop$,
namely the longest and among those the lexicographically least formula in ${[p]}_\Gamma$.
In fact, this is the same as applying all possible substitutions until reaching the leaves in the substitution trees from the proof of Lemma~\ref{lem:finiteDefSet}.

\begin{definition}
For any finite set of boolean formulas $X \subseteq \Lng_B$, we define $\pick(X)$ as the longest element of $X$ and among those the lexicographically first.

Formally, let $\pick(X) := \min_< \{ P \in X \mid \forall Q \in X : l(P) \geq l(Q) \}$
\end{definition}

\begin{definition}[Canonical Model]\label{def:canonicalm}
For a finite vocabulary $\Prop$ the corresponding canonical model $\M=(W,R,V,\DF)$ is defined as follows:
\begin{itemize}
\item $W := \{ \Gamma \subseteq \Lng(\Prop) \mid \Gamma \text{ is maximally consistent} \}$
\item $\Gamma_1 R_i \Gamma_2 :\iff  \{ \phi \mid \Box_i \phi \in \Gamma_1 \} \subseteq \Gamma_2$
\item $V(\Gamma) := \Gamma \cap \Prop$
\item For each $\Gamma$ and any $p \in \Prop$, let $\DF_\Gamma(p) := \pick({[p]}_\Gamma)$
\end{itemize}
\end{definition}

\begin{lemma}
The canonical model is indeed a model, and not just a premodel.
\end{lemma}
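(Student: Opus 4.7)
The plan is to verify the two conditions of Definition~\ref{def:models} at every world $\Gamma \in W$: (a) whenever $\df_\Gamma(P) = \df_\Gamma(Q)$ syntactically, we have $V_\Gamma(P) = V_\Gamma(Q)$; and (b) whenever an atom $p$ occurs in $\DF_\Gamma(q)$, we have $\DF_\Gamma(p) = p$.

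For (a), I would first establish by induction on $P \in \Lng_B$ the key auxiliary claim that $(P \equiv \df_\Gamma(P)) \in \Gamma$. The base case $P = p$ holds because $\DF_\Gamma(p) = \pick({[p]}_\Gamma) \in {[p]}_\Gamma$ by construction; the inductive steps use the pattern axioms for $\lnot$ and $\land$, which propagate $\equiv$ through the boolean connectives. Given the claim, if $\df_\Gamma(P) = \df_\Gamma(Q)$ then two applications together with symmetry and transitivity yield $(P \equiv Q) \in \Gamma$, hence $(P \leftrightarrow Q) \in \Gamma$ by the equivalence axiom. A routine induction on $P \in \Lng_B$ then shows that the lifted valuation satisfies $V_\Gamma(P) = 1$ iff $P \in \Gamma$ (using only maximal consistency and propositional reasoning), so $(P \leftrightarrow Q) \in \Gamma$ immediately gives $V_\Gamma(P) = V_\Gamma(Q)$.

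For (b), I would argue by contradiction exploiting the ``longest-then-lex-first'' structure of $\pick$. Suppose $p$ occurs in $\DF_\Gamma(q) = \pick({[q]}_\Gamma)$ but $\DF_\Gamma(p) = T \neq p$. The case $p = q$ is immediate from the non-circularity axiom: $p \not\equiv T$ holds whenever $p$ occurs in $T$ and $T \neq p$, which contradicts $T \in {[p]}_\Gamma$. For $p \neq q$, we have $(p \equiv T) \in \Gamma$, and applying the atomic occurrence substitution axiom to $(q \equiv \DF_\Gamma(q)) \in \Gamma$ at an occurrence $k$ of $p$ in $\DF_\Gamma(q)$ produces $(q \equiv [k \col p \mapsto T]\DF_\Gamma(q)) \in \Gamma$. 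If $l(T) > 1$, this new member of ${[q]}_\Gamma$ is strictly longer than $\pick({[q]}_\Gamma)$, contradicting maximality of length. If $T = r$ is atomic with $r \neq p$, then $\pick({[p]}_\Gamma)$ being atomic forces every element of ${[p]}_\Gamma$ to be atomic, and $r$ is then the lex-minimum there, so in particular $r < p$; then the substituted formula has the same length as $\DF_\Gamma(q)$ but is strictly lex-smaller, again contradicting the tie-breaking rule in $\pick$.

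The main obstacle I expect is the equal-length subcase in (b): it forces us to rely on the lexicographic tie-breaker in $\pick$ and to verify carefully that replacing a single occurrence of $p$ by a lex-smaller atom $r$ yields a lex-smaller formula overall. Everything else—the pattern-axiom induction for (a) and the boolean truth lemma at a maximally consistent set—is routine, and Lemma~\ref{lem:finiteDefSet} guarantees that $\pick$ is defined in the first place. Together, (a) and (b) show that $\M$ satisfies both clauses of Definition~\ref{def:models}.
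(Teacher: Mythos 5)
Your proposal is correct and follows essentially the same route as the paper: condition (a) via the equivalence axiom plus the boolean part of the truth lemma at maximally consistent sets, and condition (b) by exploiting the longest-then-lexicographically-first structure of $\pick$ together with the occurrence substitution axiom. The only (harmless) difference is that you derive $(P \equiv \df_\Gamma(P)) \in \Gamma$ directly by induction with the pattern axioms instead of forward-referencing the Truth Lemma, which if anything makes the dependency structure cleaner.
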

\begin{proof}
We need to show two properties.
\begin{itemize}
  \item For all $P,Q \in \Lng_B$:
    If $\df_w(P)=\df_w(Q)$, then we also have $V_w(P) = V_w(Q)$.
    This follows from the equivalence axiom $(P \equiv Q) \to (P \leftrightarrow Q)$ and the boolean part of the Truth Lemma~\ref{lem:truth} below which can be shown independently.
  \item For all $p,q \in \Prop$, we need to show that if $p$ occurs in $\DF_w(q)$, then $\DF_w(p) = p$.
    Now, take any $p$ in $\DF_w(q)$.
    Because $\DF_w(q)$ is among the longest formulas in ${[q]}_\Gamma$, there cannot be any $P \in {[p]}_\Gamma$ with $l(P) > 1$.
    Namely, ${[p]}_\Gamma$ only consists of propositional letters.

    Because $\DF_w(q)$ is the lexicographically first among the longest formulas in ${[q]}_\Gamma$, also $p$ must be lexicographically first in ${[p]}_\Gamma$, for otherwise we can replace it in $\DF_w(q)$ by another propositional letter which is lexicographically smaller than $p$ to obtain a lexicographically smaller formula in ${[q]}_\Gamma$.
    Together, we have $\DF_w(p) = p$.\qedhere
\end{itemize}
\end{proof}

\begin{lemma}\label{lem:pickExtended}
In the canonical model we have for all $P \in \Lng_B(\Prop)$ that $\df_\Gamma(P) = \pick({[P]}_\Gamma)$.
\end{lemma}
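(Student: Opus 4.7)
The plan is to prove the equality $\df_\Gamma(P) = \pick([P]_\Gamma)$ by straightforward structural induction on $P \in \Lng_B$, using the pattern axioms to control the shape of $[P]_\Gamma$ in the inductive steps. The base case $P = p$ is immediate: $\df_\Gamma(p) = \DF_\Gamma(p) = \pick([p]_\Gamma)$ by Definition~\ref{def:canonicalm}.

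For the negation case $P = \neg P'$, I would first use the pattern mismatch axiom and the pattern $\lnot$ axiom to show that the non-atomic elements of $[\neg P']_\Gamma$ are exactly $\{ \lnot Q \mid Q \in [P']_\Gamma \}$: any non-atomic element must be of the form $\lnot Q$ (it cannot be a conjunction by pattern mismatch), and then $\lnot P' \equiv \lnot Q \in \Gamma$ is equivalent to $P' \equiv Q \in \Gamma$ by the pattern $\lnot$ axiom. The map $Q \mapsto \lnot Q$ increases length by exactly $1$, so the longest members of $[\neg P']_\Gamma$ are precisely $\{ \lnot Q \mid Q \text{ is longest in } [P']_\Gamma \}$ (atomic members have length $1$, which is strictly smaller than $l(\lnot Q) \geq 2$, so they are never longest). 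Since the lexicographic ordering on $\Lng_B$ respects prefixing by $\lnot$, the lex-least such $\lnot Q$ is $\lnot \pick([P']_\Gamma)$. By induction hypothesis this equals $\lnot \df_\Gamma(P') = \df_\Gamma(\lnot P')$.

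The conjunction case $P = (P_1 \land P_2)$ is analogous but requires a bit more care. By pattern mismatch, any non-atomic element of $[(P_1 \land P_2)]_\Gamma$ must be of the form $(R \land S)$, and the pattern $\land$ axiom yields that $(R \land S) \in [(P_1 \land P_2)]_\Gamma$ iff $R \in [P_1]_\Gamma$ and $S \in [P_2]_\Gamma$. Since $l((R \land S)) = l(R) + l(S) + 3$, length is maximized by independently maximizing $l(R)$ and $l(S)$; again atomic elements of $[(P_1 \land P_2)]_\Gamma$ have length $1$ and are dominated. Using that the lex order on same-length strings of the form $(R \land S)$ first compares the $R$-parts and then the $S$-parts, the lex-least longest element is $(\pick([P_1]_\Gamma) \land \pick([P_2]_\Gamma))$, which by IH equals $(\df_\Gamma(P_1) \land \df_\Gamma(P_2)) = \df_\Gamma((P_1 \land P_2))$.

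The main obstacle is a bookkeeping one rather than a conceptual one: I need to make sure that the lexicographic ordering on $\Lng_B$ induced by the enumeration of $\Prop$ really behaves compositionally with respect to $\lnot$ and $\land$ (which the paper asserts when introducing $<$), and to handle the small asymmetry that an equivalence class $[P]_\Gamma$ for non-atomic $P$ may still contain atoms by, e.g., $p \equiv \lnot P' \in \Gamma$. Both points are resolved by the length argument: atoms always have length $1$ and are strictly shorter than any $\lnot Q$ or $(R \land S)$, so they are irrelevant to $\pick$ in the inductive cases.
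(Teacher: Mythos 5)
Your proof is correct and follows essentially the same route as the paper: structural induction on $P$, with the pattern $\lnot$ and pattern $\land$ axioms (plus pattern mismatch) identifying the non-atomic members of the equivalence class, and the length function ensuring that $\pick$ commutes with the connectives. Your explicit observation that atoms in ${[\lnot P']}_\Gamma$ or ${[(P_1\land P_2)]}_\Gamma$ are harmless because they have length $1$ and are never longest is a point the paper's proof leaves implicit, but it is the same argument.
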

\begin{proof}
  By induction on the structure of $P$.
  The base case $P = p$ follows from Definition~\ref{def:canonicalm}.

  For the induction step, consider the case $P = \lnot Q$.
  Then we have the following chain of identities:
  \[ \df_\Gamma(\lnot Q)    \stackrel{\text{Def.~\ref{def:canonicalm}}}{=}
  \lnot \df_\Gamma(Q)       \stackrel{\text{IH}}{=}
  \lnot \pick({[Q]}_\Gamma) \stackrel{\ast}{=}
  \pick({[\lnot Q]}_\Gamma) \]
  where the step IH is by induction hypothesis and the step $\ast$ is shown as follows.
  By definition of $\pick$ we have that $\pick({[\lnot Q]}_\Gamma)$ is
\[ {\min}_< \{ P \in {[\lnot Q]}_\Gamma \mid \forall R \in {[\lnot Q]}_\Gamma : l(P) \geq l(R) \} \]
which by pattern matching is the same as
\[ {\min}_< \{ \lnot P \mid P \in {[Q]}_\Gamma \text{ and } \forall R \in {[Q]}_\Gamma : l(\lnot P) \geq l(\lnot R) \} . \]
Because $<$ and $\geq$ with respect to $l(\cdot)$ is preserved under negation, this is the same as
\[ \lnot {\min}_< \{ P \in {[Q]}_\Gamma \mid \forall R \in {[Q]}_\Gamma : l(P) \geq l(R) \} \]
which is $\lnot \pick({[Q]}_\Gamma)$.

A similar chain covers the case $P = (Q_1 \land Q_2)$.
\end{proof}

\begin{lemma}[Truth Lemma]\label{lem:truth}
Consider the canonical model $\M$.
For all worlds $\Gamma$ and all formulas $\phi$ without announcement operators we have
  $\M, \Gamma \vDash \phi$ iff $\phi \in \Gamma$.
\end{lemma}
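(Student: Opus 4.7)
The plan is to prove the Truth Lemma by induction on the structure of $\phi$, which has the five clauses $p$, $P_1 \equiv P_2$, $\lnot \psi$, $(\psi \land \theta)$ and $\Box_i \psi$ since announcement operators are excluded by assumption. The base case $\phi = p$ is immediate from the valuation $V(\Gamma) := \Gamma \cap \Prop$ in the canonical model, and the inductive steps for the propositional connectives $\lnot$ and $\land$ are the standard maximal-consistency arguments. This leaves two genuinely content-bearing cases: $\phi = (P_1 \equiv P_2)$, which is a base case because $\equiv$ only takes boolean arguments, and $\phi = \Box_i \psi$.

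For the $\equiv$ clause I would unfold the semantics as $\M, \Gamma \vDash P_1 \equiv P_2$ iff $\df_\Gamma(P_1) = \df_\Gamma(P_2)$, and then invoke Lemma~\ref{lem:pickExtended} to rewrite this as $\pick({[P_1]}_\Gamma) = \pick({[P_2]}_\Gamma)$. The ``if'' direction is then immediate, since $(P_1 \equiv P_2) \in \Gamma$ makes ${[P_1]}_\Gamma$ and ${[P_2]}_\Gamma$ literally the same set, and therefore their picks the same formula. For the ``only if'' direction, let $S$ denote the common pick; then $S$ belongs to both ${[P_1]}_\Gamma$ and ${[P_2]}_\Gamma$, so $(P_1 \equiv S)$ and $(S \equiv P_2)$ are both in $\Gamma$, and the transitivity and symmetry axioms deliver $(P_1 \equiv P_2) \in \Gamma$. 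The implicit use of $\pick$ on compound-formula classes is legitimate because the pattern-matching axioms together with Lemma~\ref{lem:finiteDefSet} make ${[P]}_\Gamma$ finite for every $P \in \Lng_B$ by an easy induction on $P$.

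For the $\Box_i$ clause I would apply the standard existence-lemma technique from modal canonical-model completeness: if $\Box_i \psi \notin \Gamma$, then the set $\{ \chi \mid \Box_i \chi \in \Gamma \} \cup \{\lnot \psi\}$ is consistent by the $K$ axiom and necessitation, and a Lindenbaum construction extends it to a maximally consistent $\Delta$ with $\Gamma R_i \Delta$ and $\psi \notin \Delta$; the induction hypothesis on the strictly smaller formula $\psi$ then closes both directions without any interaction with the definition apparatus. The main obstacle is plainly the $\equiv$ clause, since everything modal is routine. But even there, once Lemma~\ref{lem:pickExtended} is available the argument collapses to the observation that an equivalence class is determined by any one of its representatives, so ``same canonical pick'' and ``same $\equiv_\Gamma$-class'' really are two names for the same thing, and all the substantive work has already been done in the lemmas leading up to this point.
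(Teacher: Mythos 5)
Your proposal is correct and follows essentially the same route as the paper: induction on $\phi$ with the $\equiv$ case reduced, via Lemma~\ref{lem:pickExtended}, to the equivalence of ``same $\equiv_\Gamma$-class'' and ``same $\pick$'', with the right-to-left direction closed by transitivity through the common representative. Your explicit remarks that the modal case is a routine existence-lemma argument and that finiteness of ${[P]}_\Gamma$ for compound $P$ follows from Lemma~\ref{lem:finiteDefSet} plus pattern matching are both correct and merely spell out details the paper leaves implicit.
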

\begin{proof}
By induction on the complexity of $\phi$.
The only non-standard case is the $\equiv$ operator.
We want to show
\[ P \equiv Q \ \in \Gamma \iff \M,\Gamma \vDash P \equiv Q \]
for which it suffices to show
\[ {[P]}_\Gamma = {[Q]}_\Gamma \iff \df_\Gamma(P) = \df_\Gamma(Q). \]

\emph{For left to right}, suppose ${[P]}_\Gamma = {[Q]}_\Gamma$.
This implies $\pick({[P]}_\Gamma) = \pick({[Q]}_\Gamma)$.
Hence we have $\df_\Gamma(P) = \df_\Gamma(Q)$ by Lemma~\ref{lem:pickExtended}.

\emph{For right to left}, suppose we have $\df_\Gamma(P) = \df_\Gamma(Q)$.
Then by Lemma~\ref{lem:pickExtended} we have a single formula $R := \pick({[P]}_\Gamma) = \pick({[Q]}_\Gamma)$.
In particular we have $R \in {[P]}_\Gamma$ and $R \in {[Q]}_\Gamma$.
Hence $P \equiv R \in \Gamma$ and $R \equiv Q \in \Gamma$.
Now by transitivity from Definition~\ref{def:axioms} we have $P \equiv Q \in \Gamma$.
Therefore ${[P]}_\Gamma = {[Q]}_\Gamma$.
\end{proof}

Finally, we can state and prove completeness.

\begin{theorem}[Completeness]
$\SMLD$ is weakly complete for announcement-free fragment of $\Lng$, and
$\SPALD$ is weakly complete for the full $\Lng$.
\end{theorem}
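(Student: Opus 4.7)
The plan is a standard Henkin-style argument by contraposition: from a consistent formula $\phi_0$ I produce a pointed model satisfying it. Because we are after \emph{weak} completeness, it suffices to handle one formula at a time, so I restrict to the finite sub-vocabulary $\Prop_0 \subseteq \Prop$ of atoms occurring in $\phi_0$. This fits the standing convention of this section that $\Prop$ is finite, and deliberately sidesteps the non-compactness already flagged by the authors with the chain $\{p_i \equiv p_{i+1} \land p_{i+2}\}$.

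For the $\SMLD$ half, I extend $\{\phi_0\}$ by the usual Lindenbaum construction to a maximally consistent set $\Gamma_0 \subseteq \Lng(\Prop_0)$ and take the canonical model $\M$ of Definition~\ref{def:canonicalm}. This is well-defined: Lemma~\ref{lem:finiteDefSet} ensures each ${[p]}_\Gamma$ is finite so that $\pick({[p]}_\Gamma)$ exists, and the lemma immediately after Definition~\ref{def:canonicalm} verifies both model constraints (so $\M$ is truly a model, not merely a premodel). The Truth Lemma~\ref{lem:truth} then gives $\M,\Gamma_0 \vDash \phi_0$; by contraposition, $\SMLD$ is weakly complete for the announcement-free fragment.

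For $\SPALD$ on the full $\Lng$, I invoke the classical PAL reduction strategy. The reduction axioms in Definition~\ref{def:axioms} let us provably push or eliminate the outermost $[\phi]$, and a suitable complexity measure on announcement formulas (as in~\cite{Plaza89:lopc,DitHoekKooi2007:del}) strictly decreases under each rewrite, so every $\phi \in \Lng$ is provably equivalent in $\SPALD$ to an announcement-free $\phi^\ast$. Given consistent $\phi$, the translation $\phi^\ast$ is also consistent and uses only finitely many atoms; by the $\SMLD$ case it is satisfiable, and soundness of the reductions transfers satisfaction back to $\phi$. The new axiom $[\phi](P \equiv Q) \lra (\phi \to (P \equiv Q))$ is the only non-standard ingredient in the translation, but since $\equiv$-formulas are evaluated purely locally at a world, the translation treats them exactly like atomic propositions and the usual termination measure goes through unchanged.

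The genuinely hard work has already been done: the one real obstacle to completeness is guaranteeing a sensible canonical definition function in the face of the non-compactness, and that is precisely what Lemma~\ref{lem:finiteDefSet} delivers via the $\mrg$-construction and König's Lemma. With that lemma and the Truth Lemma already in hand, the theorem itself reduces to assembling these pieces together with the standard PAL-to-ML reduction, and no further technical difficulty arises.
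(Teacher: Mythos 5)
Your proposal is correct and follows essentially the same route as the paper's own proof: reduce to an announcement-free formula via the PAL reduction axioms, restrict to the finite vocabulary of that formula, build the canonical model of Definition~\ref{def:canonicalm} (justified by Lemma~\ref{lem:finiteDefSet} and the model-constraint lemma), and conclude via the Truth Lemma~\ref{lem:truth} and validity of the reduction axioms. The only difference is presentational: you spell out the termination measure for the reduction and the treatment of $\equiv$-formulas as atoms, which the paper leaves implicit by citing the standard PAL literature.
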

\begin{proof}
Suppose $\phi$ is not provable.
By the PAL reduction axioms there is a formula $\phi'$ without announcements such that $\vdash \phi \leftrightarrow \phi'$.
Hence $\phi'$ is not provable and therefore $\lnot \phi'$ is consistent.

Let $\Prop$ be the vocabulary of $\phi'$.
Let $\M$ be the canonical model for $\Prop$ per Definition~\ref{def:canonicalm}.
Let $\Gamma$ be any maximally consistent set containing $\lnot \phi'$.
Such a set always exists and can be defined using a standard Lindenbaum Lemma~\cite[p.~197]{BRV}.
In particular, $\Gamma$ is an element of $W$ in $\M$.
Then by the Truth Lemma we have $\M,\Gamma \vDash \lnot \phi'$.

The reduction axioms are also semantically valid, so we have $\M,\Gamma \vDash \lnot \phi$.
Hence $\phi$ is not semantically valid.
\end{proof}

\section{Knowing \emph{the} Definition}

Our language does not allow us to express that an agent knows \emph{the} definition of a proposition.
We can add this using an operator similar to $Kv$ from~\cite{WangFan2013KvPAL}.
Formally, let $\Kd_i P$ where $P \in \Lng_B$ have the following semantics:
\[ \M,w \vDash \Kd_i(P)  \ \iff \  \forall w' :
  w R_i w' \text{ implies } \df_w(P) = \df_{w'}(P) \]

As we have shown in Examples~\ref{ex:knowing-without-understanding} and~\ref{ex:understanding-without-knowing} knowing whether something is true and knowing its definition are contingent, neither implies the other.

We can then also define the notion of \emph{explicitly knowing},
which is the combination of knowing that and knowing the definition:
  \[ \Kx_i(P) :=  \Box_i P \land \Kd_i(P) \]
However, in our framework it is only possible to know propositional formulas explicitly.
For example, the formula $\Kx_i(\Box_j(p \land q))$ is not in the language.

Also adding \emph{the} definition itself to the language, with the following operator $:=$ could be helpful.
\[ \M,w \vDash p:=P  \ \iff \  \DF_w(p) = P \]
For example we then have the following validities:
\begin{itemize}
\item $p := P \imp p \equiv P$ \ (definition implies equivalence)
\item $p := P \imp \lnot (p := Q)$ for all $P \neq Q$ \ (uniqueness)
\item $(p := P \land \Kd_i(p)) \imp \Box_i (p := P)$
\item $\Kd_i P \land \Box_i(P \equiv Q) \to \Kd_i Q$
\end{itemize}
In fact, the $:=$ operator could also simplify our original completeness proof.
Choosing definitions for the canonical model is trivial for this extended language, because each maximally consistent set $\Gamma$ will contain exactly one formula of the form $p := P$ for each $p \in \Prop$.%
\footnote{We thank Alexandru Baltag for pointing out how $:=$ may simplify our logic.}

We leave it as future work to axiomatize the extension of our logic with $\Kd$.

\section{Conclusion and Future Work}

We presented an extension of Public Announcement Logic (PAL) to model the knowledge of meanings with boolean definitions.
In our logic agents can understand a proposition without knowing its truth value or the other way round.
Moreover, multiple agents can agree on something without agreeing on its meaning and vice versa.

We also presented a sound and complete axiomatization with intuitive axioms to characterize the equivalence operator $\equiv$.
The completeness proof is based on a standard canonical model construction, extended with two new ideas for boolean definitions.
We use $\mrg$ to combine different possible definitions and then use a tree of \emph{occurrence substitutions} to ensure that there are only finitely many definitions to choose from.

Our formal contributions are thus more about pattern matching and less the epistemic and dynamic operators.
Nevertheless, we think that our logic showcases an interesting interaction between boolean definitions and the standard operators $K$ and $[\phi]$.
On the other hand, as a subsystem of our proof system, the pattern matching logic regarding $\equiv$ seems interesting on its own.
It may be applied in computer science or combined with other philosophical logics.
In fact, our pattern matching and mismatching axioms are reminiscent of term rewriting rules.
We therefore conjecture that our central Lemma~\ref{lem:finiteDefSet} can also be shown using Kruskal's Theorem applied to term rewriting, as discussed in~\cite{Dershowitz1982:OrderingsRewriting}.%
\footnote{We thank one of the anonymous reviewers for suggesting this connection.}

Our work can be extended in several ways.

The framework is compatible with the range of multi-agent epistemic logics from $K_n$ to ${S5}_n$.
The usual axioms for frame properties can be added, for example one might add transitivity for positive introspection, or reflexivity for truthfulness of knowledge.

We already mentioned in the last section that our logic relates to the $\Kv$ operator from~\cite{WangFan2013KvPAL}.
We think that ``knowing value'' models could also be equipped with definitions for their general variables instead of propositions.
This can then be combined with ``public inspection'' from~\cite{vEGW2017:KvPIL} and the resulting framework might give a new perspective on knowledge of terms.

The distinction between value and meaning is best illustrated in the setting of cryptographic protocols, where you may know the value of the message but only part of the meaning.
To illustrate this, suppose you have your own private key $k$.
If you receive an encrypted message ${\{a, {\{b\}}_{k'}\}}_k$, then you can decode the outer level and learn the value of $a$.
But you cannot learn $b$ if it is encrypted by another key $k'$ that you do not have.
In fact, you might not even know that the message you received is of this form and only understand ${\{a, x\}}_k$.
Such phenomena also happen in everyday communication.
You may only get part of the meaning of a sentence, but by asking ``what do you mean by \dots?'' you can learn more.

After seeing the details of our framework one might wonder how it relates to other logics of ambiguity.
We only mention two related works and leave more detailed comparisons for the future.

A semantic approach is~\cite{HalpernKets14:LogicAmbig} where agents are given different valuation functions to encode their disagreement about (ambiguous) atomic propositions.
In our models with definitions there is no need for additional valuation functions and our approach is more syntactic.
In our logic, knowing the meaning of a proposition is not reducible to knowing the valuation, e.g.,  two tautologies can still be very different definitionally.
Moreover, we also handle uncertainty about the actual meaning of propositions and provide agents with a way to learn it by update and pattern matching.

Another related approach to model syntactic ambiguity is~\cite{Kuijer13:Sequent} where the meaning of connectives is not fixed.
For example, an agent might wonder whether $p \ast q$ means $p \lor q$ or $p \land q$.
A similar example could be modelled in our logic with $r \equiv (p \lor q)$ vs.\ $r \equiv (p \land q)$.

Finally, there are two obvious limitations of our logic.
First, all definitions are boolean but in principle also modal definitions like $p := \Box_j q$ are interesting.
Second, as mentioned above, logical equivalence and equivalence by definition are not connected.
For example, $\lnot (p \land q)$ and $\lnot p \lor \lnot q$ are logically equivalent, but due to the pattern mismatch axiom we can never have $\lnot (p \land q) \equiv (\lnot p \lor \lnot q)$.
Depending on the application, users of our logic might consider this a problem or a feature.
We think that our ideas can be extended in both directions, but leave this as future work.

\bibliographystyle{eptcs}
\bibliography{paldefrefs}

\end{document}